\documentclass[letterpaper]{article}
\usepackage[utf8]{inputenc}
\usepackage[english]{babel}
\usepackage{graphicx}
\usepackage{subcaption}
\usepackage{tcolorbox}
\usepackage{mathtools}
\usepackage{amsmath}
\usepackage{algorithm}
\usepackage{algcompatible}
\usepackage{hyperref}
\usepackage{algpseudocode}
\usepackage[toc,page]{appendix}
\usepackage{pifont}
\usepackage{amsthm}
\usepackage{caption}
\usepackage{tikz}
\usepackage{tikz-qtree}
\usepackage{upgreek}

\newtheorem{theorem}{Theorem}[section]

\newtheorem{lemma}[theorem]{Lemma}
\theoremstyle{definition}
\newtheorem{definition}{Definition}[section]

\algrenewcommand\textproc{}

\algdef{SE}[SUBALG]{Indent}{EndIndent}{}{\algorithmicend\ }%
\algtext*{Indent}
\algtext*{EndIndent}

\usetikzlibrary{decorations.pathreplacing,shapes.misc,arrows,automata,positioning,shapes,shadows,fit}
\tikzset
  {dest/.style={circle,draw,minimum width=1mm,inner sep=0mm},
   start/.style={diamond,fill,draw,minimum width=1mm, minimum height=1mm,inner sep=0mm},
   terminal/.style={circle,draw,minimum width=1.5mm,inner sep=0mm,label={[label distance=-2.7mm]{\tiny+}}}
  }


\title{Distributed Approximation Algorithms for\\ the Combinatorial
  Motion Planning Problem}

\author{
\normalsize Simran Dokania \\
\normalsize \texttt{sdokania@acm.org}\\
  \and
\normalsize Aditya Paliwal\\
\normalsize \texttt{adityapaliwal@acm.org}\\
  \and
\normalsize Shrisha Rao\\
\normalsize \texttt{shrao@acm.org}\\
}

\date{}

\begin{document}

\maketitle

\begin{abstract}
We present a new $4$-approximation algorithm for the Combinatorial
Motion Planning problem which runs in
$\mathcal{O}(n^2\pmb\upalpha(n^2,n))$ time, where $\pmb\upalpha$ is
the functional inverse of the Ackermann function, and a fully
distributed version for the same in asynchronous message passing
systems, which runs in $\mathcal{O}(n\log_2n)$ time with a message
complexity of $\mathcal{O}(n^2)$. This also includes the first fully
distributed algorithm in asynchronous message passing systems to
perform ``shortcut" operations on paths, a procedure which is
important in approximation algorithms for the vehicle routing problem
and its variants. We also show that our algorithm gives feasible
solutions to the $k$-TSP problem with an approximation factor of $2$
in both centralized and distributed environments. The broad idea of
the algorithm is to distribute the set of vertices into two subsets
and construct paths for each salesman over each of the two
subsets. Finally we combine these pairwise disjoint paths for each
salesman to obtain a set of paths that span the entire graph. This is
similar to the algorithm by Yadlapalli et. al. \cite{3.66} but differs
in respect to the fact that it does not require us to use minimum cost
matching as a subroutine, and hence can be easily distributed.

\end{abstract}

\section{Introduction}

The traveling salesman problem is one of the most famous and one of the most thoroughly studied NP-complete problems in literature. There has been an increasing amount of interest in the various variations of the traveling salesman problem and the vehicle routing problems over the past few decades due to their vast number of practical applications \cite{Application1} \cite{Application5}.

Routing of unmanned vehicles for scouting and reconnaissance operations usually involves using multiple vehicles of different types, i.e., not all vehicles may be equipped with the same sensors and thus some might be unsuitable to scout certain locations. Typically these unmanned vehicles are assigned targets so as to minimize some cost function like fuel consumed, time taken, etc. Such problems can be modelled as various variants of the vehicle routing problem \cite{1.5}, \cite{1.66}, \cite{2_1}, \cite{2_2}.

Robotic arms in assembly lines are typically assigned tasks as a collection of points in 3 dimensional space that they must visit and then perform some operation at these locations. The order in which these locations are visited affects how much energy is consumed by the robot and how much time is taken to complete the entire operation. Usually, these set of target locations can be visited by multiple robots working simultaneously to speed up the process. This problem can also be modelled as an instance of a variant of the vehicle routing problem. Solving such an instance optimally can imply large amounts of savings in time and money for industries \cite{Application2}.

With the boom in e-commerce, logistics is becoming an increasingly important division in companies and efficient and timely delivery of goods is a critical factor to stay alive in these competitive markets. As the number of warehouses and delivery fleets of these e-commerce giants increases, so does the difficulty of the task to optimally route these vehicles to deliver the goods whether to minimize the total time taken or amount of fuel consumed. This again highlights the importance of efficient algorithms to come up with efficient algorithms for the vehicle routing problem and its variants.

In this paper, we will present the first distributed approximation algorithm for a variant of a multiple depot-terminal Hamiltonian path problem called the Combinatorial Motion Planning problem (CMP). Since CMP is known to be NP-Complete (because it is a generalization of the Traveling Salesman Problem which is NP-Complete \cite{Sahni}), we will focus on approximation algorithms, which are polynomial time algorithms but relax the requirement of optimality, i.e., they are algorithms that produce feasible solutions of a given optimization problem within some known bound of error relative to the optimal solution to the problem. A formal definition of an approximation algorithm can be found in appendix A. In this article, we will concern ourselves with approximation algorithms where the approximation factor $\alpha$ is a constant.

Additionally, we would like to point out that there is very little research for distributed approximation algorithms for vehicle routing problems and their variants and it is becoming increasingly important due to the vast number of practical applications, to find efficient techniques in parallelism and approximation algorithms to solve instances of these problems efficiently.

Yadlapalli et. al. \cite{3.66} present a $\frac{11}{3}$ factor approximation algorithm for the Combinatorial Motion Planning problem (CMP). In this article we will present a $4$ factor approximation algorithm for the Combinatorial Motion Planning problem and then subsequently present the first fully distributed algorithm for CMP in asynchronous message passing systems. Though our algorithm gives a worse approximation factor in centralized systems, we show that we would obtain better a approximation factor compared to a distributed version of the algorithm presented by Yadlapalli et al. \cite{3.66}. Specifically, we show that a distributed version of the algorithm by Yadlapalli et al. \cite{3.66} would give an approximation factor of $\frac{13}{3}$ due to lack of exact distributed algorithms for minimum cost matching. Moreover, our algorithm has a significantly better running time of $\mathcal{O}(n^2 \pmb\upalpha(n^2,n))$ compared to a running time of at least $\mathcal{O}(n^3)$ by Yadlapalli et al. \cite{3.66} in centralized systems. Since the inverse of the Ackermann function $\pmb\upalpha$ is an extremely slowly growing function, our algorithm runs in $\mathcal{O}(n^2)$ time for all practical network sizes. In a distributed system, our algorithm runs in $\mathcal{O}(n\log_2{n})$ time and uses at most $\mathcal{O}(n^2)$ messages.

We also present the first fully distributed algorithm for performing the \textit{shortcutting} step in asynchronous message passing systems. The \textit{shortcut} operation is a very common operation used in literature for approximation algorithms for the traveling salesman problem \cite{Christofides} and a fully distributed routine for the same should aid us in developing fully distributed approximation algorithms for the traveling salesman problem. This should help us in addressing the lack of research in producing fully distributed algorithms for the traveling salesman problem and its variants. We believe that such work is necessary since as the size of problem instances gets larger, it will eventually become infeasible to run centralized algorithms to solve these large instances (since these algorithms usually have quadratic or cubic running times) especially in online variants of these problems. We should try to exploit some form of parallelism in distributed systems and develop algorithms that can be run feasibly even for larger instances of the problem.

In this article we also demonstrate a 2-approximation algorithm for the multiple traveling salesman problem ($k$-TSP) by reducing it to an instance of CMP. Subsequently, we present the first distributed approximation for $k$-TSP in asynchronous message passing systems which runs in $O(n\log_2{n})$ time and uses at most $\mathcal{O}(n^2)$ messages.

This article is organized as follows: in Section 2 we formally define the Combinatorial Motion Planning problem and define some notation that will be used throughout this paper along with some preliminaries. In Section 3, we highlight the issues faced by the current state of the art algorithms for CMP in distributed environments and subsequently present our 4-approximation algorithm for CMP in centralized systems. In Section 4 we present a fully distributed version of the $4$-approximation algorithm for CMP. In Section 5 we demonstrate how the multiple Traveling Salesman Problem can be modelled as an instance of CMP and how our algorithm provides an approximation factor of 2 in this case.

In the appendix A we provide a formal analysis of the approximation factor and running time of our algorithm for CMP which is presented in section 3.2. In appendix B we provide the pseudocode for the $4$-approximation algorithm presented in section 3.2. In appendix C we present the pseudocode and run-time analysis of our fully distributed algorithm for \textit{shortcutting}.

\section{The Combinatorial Motion Planning Problem}

In this section we will formally define our problem. But first let's define some preliminaries and introduce the notation we will be using throughout this paper. A path $P$ is defined as a sequence of edges $(v_1, v_2), (v_2, v_3), \dots, (v_{k-1}, v_k)$ such that each edge appears exactly once. The weight of a path $w(P)$ is defined as the sum of the weights of the edges in the paths, i.e., $w(P) = \sum_{i=1}^{k-1} w(v_i,v_{i+1}) $ where $w(i,j)$ denotes the weight of the edge $(i,j)$.

Before we define the Combinatorial Motion Planning Problem, we must define the Multiple Depot-Terminal Hamiltonian Path Problem With Motion Constraints -

\begin{definition}{\textbf{Multiple Depot-Terminal Hamiltonian Path Problem With Motion Constraints (MDTHPPC)}}
Given a weighted, undirected graph $G = (V,E)$ and three subsets of the vertex set $V$ -  a set of depot nodes $D = \{d_1, d_2, \dots, d_k\}$,  a set of terminal nodes $T = \{t_1, t_2, \dots, t_k\}$ and a set of target vertices $Q = V - (D \cup T)$ and also given $k$ traveling salesmen $s_1, s_2, \dots, s_k$ where $s_i$ is initially located at depot $d_i$. Each salesman is assigned a set of motion constraints $\mathcal{C}_i \subseteq Q$ which denotes the set of target vertices that the salesman is allowed to visit. Assign paths to each salesmen $s_i$, starting at $d_i$ and ending at $t_i$ such that the motion constraints are satisfied, each target vertex in $Q$ is visited by exactly one salesman and the sum of weights of paths of all salesmen is minimum.
\end{definition}
\
\newline
We consider a relaxed version of MDTHPPC where the target vertices $q \in Q$ belong to either of the following two types:

\begin{enumerate}
\item Any of the $k$ salesmen can visit the vertex
\item There exists exactly one salesman who is allowed to visit the vertex
\end{enumerate}

In this variant, we can split the set of motion constraints $\mathcal{C}_i$ for each $s_i$ into two disjoint subsets $A_i$ and $F$ where $A_i = \mathcal{C}_i - F$ is the set of vertices that the $i$-th salesman must necessarily visit, i.e., vertices of the second type as described above (since no other salesman can visit this vertex) and $F = \cap_{i=1}^{k} \mathcal{C}_i$ is the common set of vertices that all salesmen are allowed to visit, i.e., vertices of the first type as described above. Thus the vertex set of the graph is given by $V = D \cup T \cup F \cup A$ where $A = \cup_{i=1}^{k} A_i$. Also observe that $Q = F \cup A$. Yadlapalli et al. \cite{3.66} call this variation the \textbf{Combinatorial Motion Planning Problem (CMP)}.
\newline
\newline
Each edge of the graph $(i,j) \in E$ is associated with a real number $w(i,j)$ which denotes the weight of the edge. We also define the weight of a set of edges $S \subseteq E$ of the graph by $c(S) = \sum_{(i,j) \in S} w(i,j)$
\newline
\newline
Sahni and Gonzalez \cite{Sahni} have shown that the traveling Salesman Problem is hard to approximate under general weights. Since CMP is a generalization of TSP, it is also hard to approximate under general weights. Thus we consider the case where the weights of the edges satisfy the triangle inequality, i.e., for any pair of vertices $(i,j)$, $w(i,j) \le w(i,k) + w(k,j) \ \forall \ k \in V$

\section{A Centralized 4-Approximation Algorithm for CMP}

Yadlapalli et al. present a 3.66 factor approximation algorithm for the Combinatorial Motion Planning Problem \cite{3.66}. Henceforth we will refer to it as the YDRP algorithm. Our algorithm has a worse guarantee as compared to the YDRP algorithm for the Combinatorial Motion Planning Problem. In subsection 3.1 we will show that in a fully distributed system, our algorithm will in fact give a better approximation factor than the distributed version of the YDRP algorithm due to lack of exact algorithms for minimum cost perfect matching in distributed environments.

In subsection 3.2 we will present our centralized algorithm for the Combinatorial Motion Planning problem. The complete analysis of the approximation factor and running time can be found in appendix A.

\subsection{Analysis of the Distributed YDRP Algorithm for CMP}
In this section we will analyse what would happen if we try to make the YDRP algorithm distributed and show that it loses its approximation factor.
The YDRP algorithm has an approximation factor of  $\frac{11}{3}$ with a running time of $$\mathcal{O}\left(\sum_{i=1}^{k}|A_i|^3 + |F|^2\pmb\upalpha\left(|F|^2, |F|\right)\right)$$ where $\pmb\upalpha$ is the functional inverse of the Ackermann function. In the worst case this algorithm will run in $\mathcal{O}(n^3)$. Hoogeven's algorithm \cite{Hoogeven} to compute a Hamiltonian path with two specified end points dominates the running time of the YDRP algorithm and this is due to its subroutine which constructs a minimum cost perfect matching for which the current best running time on a graph with $n$ nodes and $m$ edges is $\mathcal{O}(nm)$ as given by Gabow et al. \cite{Gabow}.

The YDRP algorithm constructs a feasible solution for CMP by constructing a set of disjoint paths $\mathsf{HPP_i}$ for each salesman $s_i$ over the vertex set $S_i = \{d_i, t_i\} \cup A_i$ and a set of disjoint cycles $\mathsf{T_{F_i}}$ for each salesman $s_i$ which cover the vertex set $D \cup F$ and then combine $\mathsf{HPP_i}$ and $\mathsf{T_{F_i}}$ for each salesman $s_i$ to produce a path $\mathsf{P_i}$. They then proceed to show that if the optimal solution is represented by $\mathsf{OPT}$ then the following hold true

\begin{equation} \label{eq:1}
\sum_{i=1}^{k} c(\mathsf{HPP_i}) \le \frac{5}{3} \cdot c(\mathsf{OPT})
\end{equation}

$$
\sum_{i=1}^{k} c(\mathsf{T_{F_i}}) \le 2 \cdot c(\mathsf{OPT})
$$

$$
\sum_{i=1}^{k} c(\mathsf{P_i}) \le \sum_{i=1}^{k} c(\mathsf{HPP_i}) + c(\mathsf{T_{F_i}}) \le \frac{11}{3} \cdot c(\mathsf{OPT})
$$

Equation \ref{eq:1} is due to Hoogeven's algorithm. Specifically, Hoogeven shows that for a graph $G = (V,E)$ with a minimum spanning tree $MST$, a minimum cost perfect matching $M$ over all the vertices with odd degree in the spanning tree, an optimal Hamiltonian path $H^*$ between two specified end points of $G$ and the Hamiltonian path $H$ constructed by Hoogeven's algorithm, we have the following

\begin{equation} \label{eq:2}
c(MST) \le c(H^*)
\end{equation}
\begin{equation} \label{eq:3}
c(M) \le \frac{2}{3}\cdot c(H^*)
\end{equation}
\begin{equation} \label{eq:4}
c(H) \le c(M) + c(MST) \le \frac{5}{3}\cdot c(H^*)
\end{equation}

Our primary objective is to come up with a fully distributed algorithm for CMP. If we make a direct attempt to make the YDRP algorithm distributed by using state of the art distributed algorithms for each of the major routines in the YDRP algorithm, then we need a fully distributed version of Hoogeven's algorithm and Hoogeven's algorithm would in turn require a fully distributed algorithm for a minimum cost matching in general graphs. At present the best distributed algorithm for computing a minimum cost matching over a set of vertices is an approximation algorithm with an approximation factor of two \cite{Matching}. If we assume that we are able to create a fully distributed version of YDRP algorithm using this routine, the bound on the cost of the matching with respect to the optimum path $H^*$ would lose a factor of 2 and expressions \ref{eq:3} and \ref{eq:4} would change as follows

$$
c(M) \le 2 \cdot \frac{2}{3}\cdot c(H^*) = \frac{4}{3}\cdot c(H^*)
$$
$$
c(H) \le c(M) + c(MST) \le \frac{7}{3}\cdot c(H^*)
$$

Thus we can see that Hoogeven's algorithm would give an aproximation factor of $\frac{7}{3}$ in distributed systems. If we used this distributed version of Hoogeven's algorithm in a distributed version of the YDRP algorithm, since now we are using a $\frac{7}{3}$-approximation factor algorithm instead of the original $\frac{5}{3}$-approximation algorithm, inequality \ref{eq:1} would change as follows
\begin{equation}
\sum_{i=1}^{k} c(\mathsf{HPP_i}) \le \frac{7}{3} \cdot c(\mathsf{OPT})
\end{equation}
$$
\implies \sum_{i=1}^{k} c(\mathsf{P_i}) \le \sum_{i=1}^{k} c(\mathsf{HPP_i}) + c(\mathsf{T_{F_i}}) \le \frac{13}{3} \cdot c(\mathsf{OPT})
$$

Additionally, at present, we only have distributed algorithms for min cost matching for general graphs in synchronous message passing systems whereas we want to focus in asynchronous message passing systems. 

In the following subsections, we present a $4$-approximation algorithm for CMP which runs in $\mathcal{O}(n^2\alpha(n^2,n))$ time and does not use matching as a subroutine. In section 4, we present a fully distributed version which runs in $\mathcal{O}(n\log{n})$ time and uses at most $\mathcal{O}(n^2)$ messages in asynchronous message passing systems. Thus our algorithm performs better in terms of the approximation factor in distributed systems compared to the YDRP algorithm.

\subsection{A 4-Approximation Algorithm for CMP}

We will solve the problem in three phases, constructing two paths for each salesman, once over the set of vertices $D\cup T \cup A$ (phase I) and once over the set of vertices $D\cup F$ (phase II) and then combine these paths (phase III) to create a set of paths that covers the entire graph $G$. Henceforth, we will use the shorthand $\mathcal{A}$ to refer to this algorithm.

\subsubsection*{Phase I}
Consider the set of vertices $S_i = \{d_i, t_i\} \cup A_i$ for each salesman $s_i$. Since each salesman must necessarily visit all the vertices in $A_i$, we will construct a Hamiltonian path in the set $S_i$ starting at $d_i$ and ending at $t_i$ for each $S_i$.

First, assign the weight of the edge connecting nodes $d_i$ and $t_i$ to 0. Now construct the minimum spanning tree over the subgraph of $G$ containing $S_i$. Call this tree $MST_i$. Double the edges of $MST_i$, i.e., create a multiset of edges where each edge of $MST_i$ appears twice. Now, delete one of the edges connecting $d_i$ to $t_i$. This procedure will give us a multigraph with exactly two vertices with odd degree - $d_i$ and $t_i$.

Now that our graph is Eulerian, we can construct an Euler path from $d_i$ to $t_i$. Call this path $E_i$. Each vertex in the set $S_i$ appears at least once in this path $E_i$.

Now use this path $E_i$ as a guide to obtain a feasible Hamiltonian path over the set of vertices $S_i$ by repeated application of the \textit{shortcut} operation over all repeated occurrences of vertices in the path $E_i$

\begin{definition}{\textbf{Shortcut}}
A shortcut operation consists of replacing two consecutive edges $(i,j), (j,k)$ in a path by a single edge $(i,k)$ and the vertex $j$ is said to have been \textit{shortcutted}. By the triangle inequality it is guaranteed that this new path will have length no greater than the original path.
\end{definition} 

If we perform the shortcut operation over all the duplicate occurrences of vertices in the path $E_i$, eventually we will obtain a path in which each vertex of $S_i$ appears exactly once and thus this \textit{shortcutted} path will be a valid Hamiltonian path over the vertex set $S_i$ which starts at $d_i$ and ends at $t_i$. We will denote this path by $\mathcal{P}_i$ for each set vertex set $S_i$ and the collection of all such $\mathcal{P}_i$ will be our output in phase I. See figure \ref{fig:1}.

\begin{figure}[h!]
  \includegraphics[width=10cm]{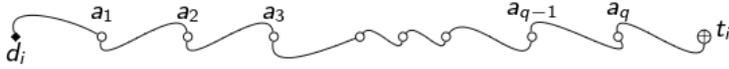}
  \centering
  \caption{Output of Phase I - path $\mathcal{P}_i$ for salesman $s_i$}
  \label{fig:1}
\end{figure}

\subsubsection*{Phase II}
Consider the the subgraph $G'$ of $G$ consisting of the set of vertices $V' = D\cup F$. First, re-assign the weight of the edges connecting each pair of depots to zero. Since these edges have the least weight among all edges of the original graph, it is guaranteed that exactly $k-1$ of them must belong to the minimum spanning tree of this graph. Construct a minimum spanning tree over $G'$

Now, delete the $k-1$ edges which have edge weight zero. Deleting these edges will disconnect all pairs of depots from each other and thus we will obtain a spanning forest $\mathsf{SF}$ consisting of $k$ trees where each tree $T_i \in \mathsf{SF}$ contains exactly one depot, i.e., $d_i$.

Now construct a Hamiltonian cycle over each tree $T_i$ using the well known 2 approximation algorithm for TSP, i.e., double the edges of $T_i$ as we did in phase I, construct an Euler tour over the obtained multigraph and then perform repeated shortcut operations over the obtained Euler Tour on duplicate occurrences of vertices to obtain a cycle where each vertex in $T_i$ appears exactly once, i.e., a Hamiltonian cycle over the vertex set $T_i$. Denote this Hamiltonian cycle for $T_i$ by $\mathcal{C}_i$. The collection of $\mathcal{C}_i$ for each salesman $s_i$ would be the output of phase II. See figure \ref{fig:2}.

\begin{figure}[h!]
  \includegraphics[width=8cm]{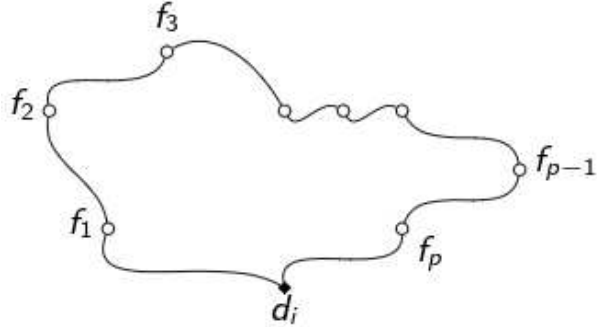}
  \centering
  \caption{Output of Phase II - cycle $\mathcal{C}_i$ for salesman $s_i$}
  \label{fig:2}
\end{figure}

\subsubsection*{Phase III}

In this phase we will combine the paths obtained in the first two phases for each salesman to obtain a feasible solution to CMP

Consider the union of the edges $\mathcal{C}_i$ and $\mathcal{P}_i$ for each salesman $i$. The degree of each depot $d_i$ will be exactly 3 in this union (or will be exactly one if $\mathcal{C}_i = \phi$ and this case can be handled trivially). Thus we can perform a shortcut operation over two of the edges incident to $d_i$ - one from $\mathcal{P}_i$ and the other from $\mathcal{C}_i$ to obtain a path starting at $d_i$ and ending at $t_i$. Denote this path by $\mathcal{H}_i$. Since each vertex in the graph $G$ belongs to exactly one such path $\mathcal{H}_i$, the union of all $\mathcal{H}_i$ is a feasible solution to CMP. See figures \ref{fig:3}, \ref{fig:4} and \ref{fig:5}.

\begin{figure}[h!]
  \includegraphics[width=11cm]{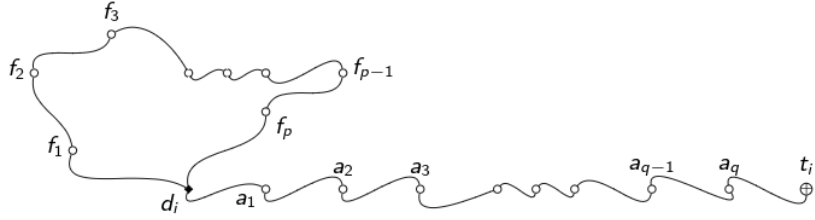}
  \centering
  \caption{Phase III - cycle $\mathcal{C}_i$ and path $\mathcal{P}_i$ are combined}
  \label{fig:3}
\end{figure}

\begin{figure}[h!]
  \includegraphics[width=11cm]{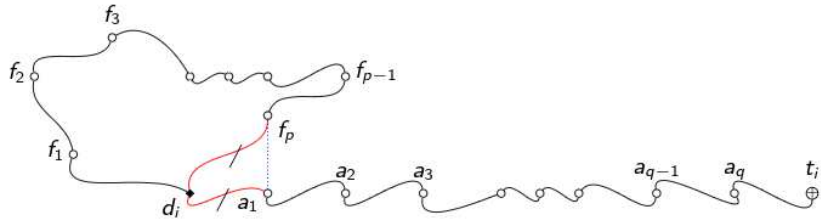}
  \centering
  \caption{Phase III - perform a shortcut over depot $d_i$}
  \label{fig:4}
\end{figure}

\begin{figure}[h!]
  \includegraphics[width=11cm]{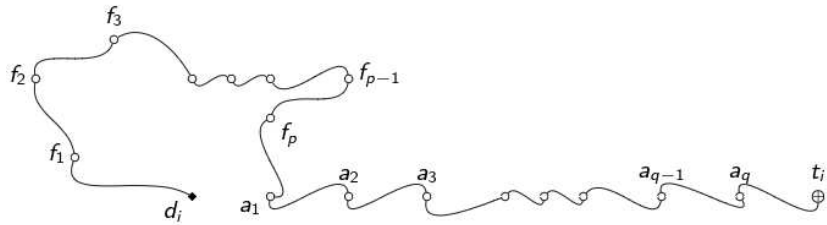}
  \centering
  \caption{Phase III output - path $\mathcal{H}_i$}
  \label{fig:5}
\end{figure}

The complete pseudocode for algorithm $\mathcal{A}$ can be found in appendix B.

\section{A Distributed Approximation Algorithm for CMP}

In this section we present the fully distributed version of the 4-approximation algorithm for CMP which was presented in section 3. Our model of computation is the asynchronous message passing system. We assume that the time taken to pass a message between any two nodes takes at most one unit of time.

Our distributed algorithm is composed of the following three major subroutines -

\begin{itemize}
\item Minimum Spanning Tree - Gallager et al. \cite{GHS} present a fully distributed algorithm to construct the minimum spanning tree of a network with $n$ nodes and $m$ edges in asynchronous message passing systems in $\mathcal{O}(n\log_2{n})$ time and using at most $\mathcal{O}(m + n\log_2{n})$ messages. Henceforth, we will refer to this algorithm as the GHS algorithm.

\item Euler Paths and Euler Tours - S. A. M. Makki et al. \cite{EulerTour} present a fully distributed algorithm to construct the Euler Tour of a network with $n$ nodes and $m$ edges in $\mathcal{O}(m)$ time and using at most $\mathcal{O}(m)$ messages in asynchronous message passing systems.

\item Shortcutting - We present the first fully distributed algorithm in asynchronous message passing systems to perform the shortcutting operation on a path with repeated vertices which works in $\mathcal{O}(n)$ time for a path of length $n$ and uses at most $\mathcal{O}(n)$ messages.
\end{itemize}

\subsection{Distributed Shortcutting}

The complete pseudocode for this algorithm and its run-time analysis can be found in appendix C.

In this section we present our fully distributed algorithm for shortcutting. The input given to our algorithm is a graph $G = (V,E)$ and a path in this graph $\mathcal{P} = (v_1, v_2), (v_2, v_3), \dots, (v_{r-1}, v_r)$. Each processor in the distributed system corresponds to a vertex $v \in V$. Note that if $\exists \ v \in V$ such that $v \not\in \mathcal{P}$ then deleting $v$ from the vertex set will have no effect on the output of the algorithm. Thus we will assume that the path $\mathcal{P}$ is a Hamiltonian path over the graph.

Each vertex $v \in V$ maintains two adjacency lists -
\begin{itemize}
\item $adj_g$ - the adjacency list of this vertex as it is defined by the set of edges $E$
\item $adj_\mathcal{P}$ - the adjacency list of this vertex as is defined by the path $\mathcal{P}$, i.e., a list of edges that are incident to $v$ in $\mathcal{P}$. Each edge in this list is also assigned an integer label according to the order in which these edges appear in $\mathcal{P}$. The edge that appears before any other edge in this list is given the label $1$, the edge that appears next is given the label $2$ and so on. This is similar to how a path is represented in the distributed Euler Tour algorithm \cite{EulerTour}
\end{itemize}
The algorithm modifies the list $adj_{\mathcal{P}}$ for each vertex $v \in V$ to produce another path $\mathcal{P}' = (v_{j_1}, v_{j_2}), (v_{j_2}, v_{j_3}),$ $ \dots, (v_{j_{l-1}}, v_{j_l})$ such that $v_{j_x} \not= v_{j_y} \ \forall \ j_x, j_y$.

Using this representation, one can easily traverse the path $\mathcal{P}$ as follows - each node in the network maintains a variable $cnt$ which denotes the number of edges from the start of the list $adj_\mathcal{P}$ that have already been traversed by us. Initially $cnt = 0$. Each time we visit a node $v_i$ in our traversal via some edge that belongs to the path (or if $v_i$ is the starting vertex of the path), we increment the counter variable $cnt$ by 1 for $v_i$. If $v_i$ is not the last node in the path $\mathcal{P}$ then there must exist an edge in the adjacency list $adj_{\mathcal{P}}$ for $v_i$ whose label is $cnt+1$ and this is the next edge to use in the traversal of path $\mathcal{P}$. Thus we increment $cnt$ by 1 and visit the neighbour of node $v_i$ across the edge with label $cnt$. If there is no edge with label $cnt+1$ then $v_i$ is an endpoint of the path and we have complete our traversal. This procedure can trivially be modified in asynchronous message passing systems where the ``traversal" across an edge $(v_i, v_j)$ is simply simulated by node $v_i$ sending a message across the edge to $v_j$

We can modify this basic traversal algorithm to perform shortcuts over vertices that appear more than once by simply introducing a single boolean variable $vis$ for each vertex $v \in V$. Initially $vis$ is set to $false$ for all nodes except the terminal node of the path $\mathcal{P}$ for which we set $vis$ to $true$ (this is done to allow us to shortcut intermediate occurrences of the terminal node. This can be easily modified if the path is actually a cycle to avoid performing shortcut on the first vertex of the path).

Each time an unvisited node is visited for the first time during the traversal, it sets its variable $vis$ to $true$ and continues the traversal algorithm as before. If on the other hand we visit a vertex which has already been visited before, we must perform a shortcut operation over this vertex. Let $v$ be the vertex over which we must perform the shortcut operation. Let $u$ be the vertex from which we arrived at $v$ in the path and let $w$ be the vertex that we would have visited from $v$ if we did not perform the shortcut operation. Now, we must replace the sequence of edges $(u,v), (v,w)$ with a single edge $(u,w)$. 

To facilitate the shortcut, node $v$ sends a message to node $u$ instructing it to replace the edge $(u,v)$ with  $(u,w)$ and similarly $v$ sends a message to $w$ instructing it to replace the edge $(v,w)$ with $(u,w)$. After this, $v$ deletes the edges $(u,v)$ and  $(v,w)$ form $adj_\mathcal{P}$ and increments $cnt$ by 1 as it would have in case it traversed along the edge $(v,w)$. Eventually, we will reach the final node in the path and at this point our algorithm will terminate and the collection of $adj_\mathcal{P}$ for each vertex will define the new \textit{shortcutted} path $\mathcal{P}'$.

Henceforth, we will refer to this algorithm with the shorthand $\mathcal{B}$.

\begin{lemma}
The fully distributed algorithm $\mathcal{B}$ for repeated shortcutting eventually terminates.
\end{lemma}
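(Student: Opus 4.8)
The plan is to show that the traversal underlying algorithm $\mathcal{B}$ makes monotone progress along the path $\mathcal{P}$ and therefore cannot run forever. The natural progress measure is the counter variable $cnt$, interpreted globally: at any point in the execution, let $c$ be the value of $cnt$ at the node that currently "holds the token" (i.e., the node that has just been visited and is about to forward the traversal). First I would argue that every elementary step of the algorithm — visiting a fresh unvisited node and forwarding along the edge labelled $cnt+1$, or arriving at an already-visited node $v$, issuing the two rewrite messages to $u$ and $w$, and forwarding to $w$ — strictly increases this measure by exactly $1$. The subtle point is that in the shortcut case $v$ increments its own $cnt$ "as it would have had it traversed $(v,w)$," so the edge that is actually followed next out of $v$ still carries the label $cnt+1$ with respect to the \emph{updated} counter; hence the invariant "the token always leaves a node along the $\mathcal{P}$-edge whose label is one more than that node's current $cnt$" is preserved, and each forwarding consumes one more original edge-label of $\mathcal{P}$.

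The second step is to bound the measure from above. The labels on edges of $adj_{\mathcal{P}}$ are fixed at initialization and range over $1,\dots,r-1$ where $r$ is the number of vertices of $\mathcal{P}$ (counted with multiplicity); shortcutting only \emph{deletes} edges from the $adj_{\mathcal{P}}$ lists and never creates new labels, so the token can forward at most $r-1$ times in total. Equivalently, define the potential $\Phi = (r-1) - (\text{total number of } \mathcal{P}\text{-edges the token has traversed or shortcut over})$; then $\Phi$ is a non-negative integer that strictly decreases with every forwarding action, so only finitely many forwarding actions occur.

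Third, I would check that the algorithm actually \emph{halts} rather than merely forwarding finitely often — i.e., that it recognizes termination. Here I use the stopping condition from the traversal description: the token stops at a node $v_i$ when there is no edge in $adj_{\mathcal{P}}$ with label $cnt+1$, and by the analysis of the traversal this happens exactly at an endpoint of the current path. Since the terminal node had $vis$ set to $true$ initially, every intermediate occurrence of it is shortcut away and the token's final resting place is the genuine terminal vertex; at that point no forwarding message is generated, no further messages are in flight (the only messages sent are forwarding messages and the pair of rewrite messages, the latter being terminal actions that trigger nothing further), so the global state is quiescent and $\mathcal{B}$ has terminated.

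I expect the main obstacle to be the second clause of the third step: carefully arguing that \emph{no messages remain in transit} when the token reaches the true terminal, which is where the asynchronous model bites. The rewrite messages sent to $u$ and $w$ during a shortcut must be shown to be "inert" — $u$ and $w$ merely edit their $adj_{\mathcal{P}}$ lists and send nothing in response — and one must make sure that a rewrite message to $w$ cannot race ahead of, or conflict with, the forwarding message that also goes to $w$. I would handle this by observing that $w$'s relabelling is consistent regardless of the arrival order (the edge $(v,w)$ is replaced by $(u,w)$ with the same label either way), and that correctness of the final path only requires the lists to be consistent once the algorithm is quiescent, which the finiteness argument above guarantees is reached.
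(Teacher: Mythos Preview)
Your proposal is correct and uses essentially the same idea as the paper: the per-vertex counter $cnt$ forces each edge of $\mathcal{P}$ to be consumed exactly once, so the traversal reaches the final edge after finitely many steps and terminates. The paper's own proof is a single sentence to this effect; your potential-function formulation and the additional care about quiescence of rewrite messages in the asynchronous model are more detailed than what the paper provides, but the underlying argument is the same.
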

\begin{proof}
Since the algorithm traverses across each edge exactly  once due to variable $cnt$ in each vertex $v$, eventually the algorithm will traverse across the final edge of the path and thus will terminate.
\end{proof}

\begin{lemma}
The fully distributed algorithm $\mathcal{B}$ for repeated shortcutting computes the shorcutted path correctly.
\end{lemma}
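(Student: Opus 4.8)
The plan is to prove correctness by comparing the output of $\mathcal{B}$ with the path produced by the ordinary sequential shortcutting procedure and showing that the two coincide; since the latter is, by the definition of \emph{shortcut}, a valid path on pairwise distinct vertices obtainable from $\mathcal{P}$, this establishes the claim. Write $\mathcal{P}=v_1,v_2,\dots,v_r$ for the input, listing vertices with repetition, and let $\mathcal{P}^\star$ be the \emph{canonical} shortcut of $\mathcal{P}$: the subsequence of $v_1,\dots,v_r$ that retains the \emph{first} occurrence of every vertex, except that the occurrence of the terminal $v_r$ is taken at position $r$ rather than at its first appearance. This is exactly the behaviour forced by initialising $vis$ to $true$ at the terminal. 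By the standing assumption, $\mathcal{P}$ is Hamiltonian, so $\mathcal{P}^\star$ visits every vertex of $V$ exactly once, starts at $v_1$, ends at $v_r$, and (the graph being complete, as the triangle inequality is assumed for all of $V$) each of its edges exists in $G$; this is the path I must show $\mathcal{B}$ reconstructs.

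First I would make precise the ``traversal'' carried out by $\mathcal{B}$: by the preceding lemma every edge of $\mathcal{P}$ is crossed exactly once, so an execution is a sequence of steps indexed $i=1,\dots,r$, step $i$ corresponding to the $i$-th vertex $v_i$ of the walk; call step $i$ a \emph{keep} step if $v_i$ has $vis=false$ on arrival (or if $i=r$), and a \emph{shortcut} step otherwise. Writing $s_i$ for the most recent kept vertex among $v_1,\dots,v_i$ (so $s_1=v_1$), the heart of the argument is the invariant $I(i)$: after step $i$ (including the splice, if any), (a) the token is held by $s_i$; (b) the edges already finalised in the lists $adj_{\mathcal{P}}$ form precisely the prefix of $\mathcal{P}^\star$ from $v_1$ to $s_i$; and (c) the not-yet-traversed list entries, together with $s_i$'s counter $cnt$, still encode the remaining walk $v_{i+1},\dots,v_r$, except that the first edge leaving it now has $s_i$, not $v_i$, as its tail endpoint.

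Then I would prove $I(i)$ by induction. The base case $I(1)$ is immediate. For the inductive step, a keep step at $i$ commits the incoming edge $(s_{i-1},v_i)$ — which by $I(i-1)$(c) is the next edge of $\mathcal{P}^\star$ — sets $s_i=v_i$, and hands the token along the next original edge to $v_{i+1}$, re-establishing (a)--(c). A shortcut step at $i$ has the token arrive at $v_i$ from $s_{i-1}$ by (c); node $v_i$ then sends the two relabelling messages that replace $(s_{i-1},v_i),(v_i,v_{i+1})$ by $(s_{i-1},v_{i+1})$, deletes its two incident path-edges, bumps $cnt$, and passes the token to $v_{i+1}$. Since $v_i$ contributes no edge to the output (matching the sequential procedure) and the new edge into $v_{i+1}$ has tail $s_{i-1}=s_i$, (a)--(c) hold again. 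Evaluating $I(r)$ — where step $r$ is a keep step, because at the final arrival there is no outgoing edge and hence no shortcut — shows that the finalised edges are exactly all of $\mathcal{P}^\star$ and that no original edge incident to a shortcut vertex has survived, so the collection of lists $adj_{\mathcal{P}}$ defines exactly $\mathcal{P}'=\mathcal{P}^\star$. I would close by remarking that the degenerate cases (the start vertex recurring, an empty component, or $\mathcal{P}$ being a cycle) are covered by the stated adjustment to the initial value of $vis$.

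The main obstacle I anticipate is part (c): maintaining the edge-label and counter bookkeeping consistently across a \emph{run} of consecutive shortcut steps, where the tail endpoint $s_{i-1}$ is repeatedly carried forward through successive splices, and arguing that a shortcut vertex's deletions never destroy an edge needed by a later step — which holds precisely because each original edge is crossed exactly once, and the two deleted edges are the one just crossed and the one that would have been crossed next. A secondary point is asynchrony: the two relabel messages, to $s_{i-1}$ and to $v_{i+1}$, may be delivered in either order, but $s_{i-1}$ is already ``behind'' the token while only $v_{i+1}$ resumes the traversal, so these updates commute and cannot race with the rest of the execution.
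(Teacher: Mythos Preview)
Your argument is correct and considerably more thorough than the paper's. The paper's proof is a two-line case split: for a non-terminal vertex, every occurrence after the first is shortcutted because $vis$, once toggled to $true$, never reverts; for the terminal vertex, initialising $vis=true$ forces all occurrences prior to the last one to be shortcutted. That is the whole proof --- it verifies only that each individual occurrence is kept or skipped correctly, and tacitly assumes the resulting adjacency lists assemble into a well-formed path. Your approach instead names the target object $\mathcal{P}^\star$ explicitly, maintains a three-part invariant $I(i)$ tracking the committed prefix and the residual encoded walk, and proves by induction that the output equals $\mathcal{P}^\star$ edge by edge; you also treat runs of consecutive shortcuts and the asynchrony of the two relabel messages, neither of which the paper mentions. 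What the paper's approach buys is brevity; what yours buys is that clause~(c) actually certifies the $adj_{\mathcal{P}}$ lists remain consistent through splices, which is the part the paper waves past. One minor wrinkle: clause~(a), ``the token is held by $s_i$'', is not literally true after a shortcut step --- the $\langle R,\cdot,\cdot,1\rangle$ message forwards control to $v_{i+1}$, not back to $s_i$ --- but you never actually use~(a) beyond what~(c) already provides, so you can simply drop or reword it.
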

\begin{proof}
\
\begin{itemize}
\item If a vertex $v$ is not the terminal vertex of the path $\mathcal{P}$ then it is easy to see that all its occurrences after its first occurrence will be shortcutted. This is because one the variable $vis$ is toggled to be $true$ it remains $true$.
\item If a vertex $v$ is the terminal vertex of the path then the repeated occurrences of $v$ in $\mathcal{P}$ that require shortcutting are all occurrences except its final occurrence. To allow us to detect start shortcutting of the repeated occurrences immediately, we set $vis = true$ for the terminal vertex.
\end{itemize}
\end{proof}

\begin{theorem}
The fully distributed algorithm $\mathcal{B}$ for repeated shortcutting is correct.
\end{theorem}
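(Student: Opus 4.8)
The claim that $\mathcal{B}$ is \emph{correct} unpacks into three assertions: that it halts on every input; that the distributed list rewrites it performs stay globally consistent and always encode an actual path; and that the path $\mathcal{P}'$ it finally encodes is a legitimate shortcutting of $\mathcal{P}$ --- one in which every vertex of $V$ occurs exactly once, whose two endpoints coincide with those of $\mathcal{P}$, and with $w(\mathcal{P}') \le w(\mathcal{P})$. Termination is exactly the first lemma above, and the occurrence-counting statement is exactly the second; so the plan is to combine those two lemmas with one additional ingredient: that the asynchronous message-passing execution faithfully simulates the sequential ``traverse-and-shortcut'' procedure.

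I would prove that last ingredient with a single invariant carried along the execution. Because a node spawns at most one outgoing path-message for each incoming one, there is always exactly one token message in flight along path edges; hence the updates to the lists $adj_\mathcal{P}$ are serialized and free of races. The invariant is that, read off in label order, the lists always describe some path, and that each step of the token transforms this path by at most one shortcut. When the token is at $v$, having arrived from $u$ and about to proceed to $w$: if $v$ is unvisited nothing is rewritten and the described path is unchanged; if $v$ is already visited, the three coordinated updates --- $u$ replaces $(u,v)$ by $(u,w)$, $w$ replaces $(v,w)$ by $(u,w)$, and $v$ deletes both of these edges --- are precisely a shortcut over $v$ in the sense of the Shortcut definition, and the label/$cnt$ bookkeeping still points the token at the correct next edge, since this rewrite alters neither that edge's label nor the value of $cnt$ relative to it. By induction over token steps the described object is, at every moment, a path obtained from $\mathcal{P}$ by a finite sequence of shortcuts; in particular $\mathcal{P}'$ has the same two endpoints as $\mathcal{P}$ (the start vertex has $vis$ set to $true$ on its first visit and so is never shortcutted), and the triangle inequality gives $w(\mathcal{P}') \le w(\mathcal{P})$ one shortcut at a time.

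Finally, the second lemma tells us that in $\mathcal{P}'$ no vertex repeats: every non-terminal vertex loses all occurrences past the first, and the terminal vertex --- seeded with $vis = true$ --- loses every occurrence except the last. Combined with the previous paragraph, $\mathcal{P}'$ is a shortcutting of $\mathcal{P}$ in which each vertex of $V$ appears exactly once, i.e., a Hamiltonian path of $G$ between the original endpoints (or, under the stated modification, a Hamiltonian cycle), which is what it means for $\mathcal{B}$ to be correct.

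The main obstacle is the middle step: pinning down that asynchrony introduces no pathologies. Everything hinges on two observations --- that the traversal is effectively sequential because each received token yields exactly one successor token, and that a shortcut performed at $v$ leaves untouched both the label of the edge the token will next traverse and the counter value that selects it --- so that no matter how message deliveries are interleaved, the token never ``loses its place'' and never doubles. Once these are established rigorously, the invariant, and hence the theorem, follow.
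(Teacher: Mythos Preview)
Your proposal is correct and its core decomposition matches the paper's: the paper's entire proof is the single sentence ``This follows from lemmas 4.1 and 4.2,'' i.e., termination plus the per-vertex occurrence argument. Where you go beyond the paper is in your ``additional ingredient'' --- the token-based invariant showing that the asynchronous rewrites to the $adj_\mathcal{P}$ lists are serialized, race-free, and always encode a path obtained from $\mathcal{P}$ by a sequence of shortcuts. The paper simply takes this for granted; your treatment makes explicit (and actually argues) that exactly one traversal message is in flight at any time and that a shortcut at $v$ leaves the next edge's label and the relevant $cnt$ value untouched. So the approaches agree, but yours supplies the rigor about asynchrony, endpoint preservation, and the weight bound $w(\mathcal{P}') \le w(\mathcal{P})$ that the paper's one-line proof leaves implicit.
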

This follows from lemmas 4.1 and 4.2

\subsection{Distributed CMP}

We are now ready to create a fully distributed algorithm for CMP. The input to the algorithm is an undirected weighted graph $G = (V,E)$ where the weight function over the edges satisfies the triangle inequality. Additionally, we are also given the set of depots $D$, the set of terminal nodes $T$, the set of vertices $A_i$ for each salesman $s_i$ and the set of common vertices $F$. Each node has information about which category it belongs to, i.e., if it is a depot or a terminal node or a target vertex and in case it is a target vertex, which set of vertices - $A_1, A_2, \dots, A_k$ or $F$ it belongs to. Each node $v \in V$ maintains an adjacency list $adj_v$ of its edges.

Initially, when a node wakes up, it sends a message across each of its edges, requesting them to report which category they belong to. Once this step is completed, each  pair of $d_i$ and $t_i$ set the weight of the edge connecting $d_i$ to $t_i$ to zero in their respective adjacency lists. Following this, all nodes $v \in S_i = \{d_i, t_i\} \cup A_i$ initiate the GHS algorithm to construct a minimum spanning tree over $S_i$. Once the minimum spanning tree is constructed, the ``core'' edge of the final fragment of the GHS algorithm sends a message to all other nodes in the tree to double their edges except for the edge connecting $d_i$ to $t_i$. To avoid confusion between the duplicate edges we assume that the edges are assigned distinct ids which can be done in a trivial fashion by adding a large enough number to the ids of the original edges. After this, the core edge also sends a message to the unique depot $d_i \in S_i$ to initiate the distributed Euler Path algorithm  of S. A. M. Makki et al. \cite{EulerTour} ($d_i$ already knows the unique terminal node $t_i$ of this set). Once the Euler Path algorithm is completed, the unique labellings given to the edges of the Euler Path by each vertex $v \in S_i$ will form a path $\mathcal{P}_i$ which subsequently acts as an input for our distributed shortcutting algorithm as described earlier. Application of the distributed shortcutting algorithm on $\mathcal{P}$ will produce the Hamiltonian Path $\mathcal{P}_i$ for the vertex set $S_i$. This ends phase I for the distributed algorithm.

Upon completion of phase I, each depot sends a message to all other depots to indicate that it is ready to start phase II of the algorithm. Once all the depots have completed phase I, they re-assign the weights of the edges connecting them to other depots to 0 and then the set of vertices $D \cup F$ initiates the GHS algorithm to construct a minimum spanning tree over $D \cup F$. Once the MST is constructed the ``core'' edge of the MST passes a message to each depot instructing them to delete the edges of the MST that connect them to other depots. This results in the $k$-spanning forest $SF$. Each depot then broadcasts a message over its own tree of the spanning forest, instructing all nodes to double their edges and then initiates the Euler Tour algorithm followed by shortcutting, similar to what was done in phase I for each $S_i$. This will result in a Hamiltonian cycle $\mathcal{C}_i$ for each tree $T_i \in SF$. After this, each depot $d_i$ will have two adjacency lists which are the outputs of the shortcutting algorithm in phases I and II respectively - $adj_{\mathcal{P}_i}$ for $\mathcal{P}_i$ and $adj_{\mathcal{C}_i}$ for $\mathcal{C}_i$. These depots increase the label of all edges in $adj_{\mathcal{P}_i}$ by $|adj_{\mathcal{C}_i}|-1$ and then shortcut the last edge in $\mathcal{C}_i$ and first edge of $\mathcal{P}_i$ to produce a path $\mathcal{H}_i$ for the salesman $s_i$. This would result in the end of phase III and completes the distributed algorithm.

\subsubsection{Complexity Analysis}

Initially $\mathcal{O}(n)$ time is spend by the nodes to identify which sets their neighbours belong to and this uses $\mathcal{O}(n^2)$ messages. The running time and message complexity in phase I is dominated by the GHS algorithm which takes $\mathcal{O}(|S_i|\log_2{|S_i|})$ time and $\mathcal{O}(|S_i|^2)$ messages for each set $S_i$. The depots then use $\mathcal{O}(k^2)$ messages to indicate to each other that they have finished phase I. Following this, phase II begins and its runtime and message complexity is also dominated by the GHS algorithm taking $\mathcal{O}(|F\cup D|\log_2{|F\cup D|})$ time and $\mathcal{O}(|F\cup D|^2)$ messages. Phase III takes constant time and constant number of messages for each depot $d_i$.

Thus the final run time of the algorithm is given by $$\mathcal{O}\left(n+ \max_{1 \le i \le k} |S_i|\log_2{|S_i|} + |F\cup D|\log_2{|F\cup D|} + k\right)$$ and uses at most $$\mathcal{O}\left(n^2 + \sum_{i=1}^{k} |S_i|^2 + |F\cup D|^2 + k^2\right)$$ messages. Both of these quantities are upper-bounded by $\mathcal{O}(n\log_2{n})$ and $\mathcal{O}(n^2)$ respectively where $n = |V|$.

\section{2-Approximation Algorithm for the Multiple Traveling Salesman Problem}


In this section we demonstrate how the 4-approximation algorithm for CMP can be used to solve the Multiple Traveling Salesman Problem and then demonstrate that we obtain an approximation factor of 2 in this special case.

\begin{definition}{\textbf{$\mathbf{k}$-Traveling Salesman Problem}}
Given a weighted undirected graph $G = (V, E)$ and $k$ traveling salesmen $s_1, s_2, \dots, s_k$ located at their depots given by $D$ = $\{d_1, d_2, \dots, d_k\}$, find a cycle for each salesman such that each city is visited by exactly one salesman and the sum of weights of all paths is minimized. It is also known as the $k$-TSP problem.
\end{definition}
\
\newline
It is easy to observe from the definition that CMP is just a generalized version of $k$-TSP. If $d_i = t_i$ and $A_i = \phi \ \forall i$ then CMP reduces to an instance of $k$-TSP and thus we can use our algorithm $\mathcal{A}$ from Section 3.2 to solve $k$-TSP also.
\newline
\newline
We can show that $\mathcal{A}$ gives an approximation factor of $2$ in $k$-TSP. Since $A_i = \phi \ \forall i$, we can simply skip Phase I of $\mathcal{A}$. This would shave off a cost of $2\cdot c(\mathsf{OPT})$ from our approximation factor. Giving an approximation factor of $2$. Since there is no other modification to $\mathcal{A}$, our runtime analysis, message complexity and proof of correctness for both the centralized and distributed algorithms remain the same for $k$-TSP also.

\section{Conclusion}

We presented the first fully distributed approximation algorithm for the Combinatorial Motion Planning 
problem in asynchronous message-passing systems, which runs in $\mathcal{O}(n\log_2{n})$ time and uses 
at most $\mathcal{O}(n^2)$ messages; and the first fully distributed approximation algorithm for the  
multiple Traveling Salesman problem ($k$-TSP) which runs in $\mathcal{O}(n\log_2{n})$ time and uses 
at most $\mathcal{O}(n^2)$ messages, which are tight bounds for MSTs~\cite{lowerbound}.

The fully distributed algorithm for shortcutting should encourage more research for fully distributed 
approximations for other variants of the vehicle routing problem, which as we mentioned earlier has a 
vast number of practical applications.

Further work is needed in fully distributed exact algorithms for minimum-cost matching in general 
graphs.  An exact algorithm would allow us to directly create a distributed version of the YDRP 
algorithm, thus giving us a $\frac{11}{3}$-approximation algorithm in distributed systems as well.

Our $2$-approximation algorithm for $k$-TSP is a result of a direct reduction of $k$-TSP to an 
instance of CMP, does not necessarily exploit well any of the  underlying structure of $k$-TSP 
itself.  This leaves open the possibility that if we take advantage of some property of $k$-TSP 
itself, it might be possible to create a distributed approximation algorithm for $k$-TSP with a better approximation factor.


\begin{appendices}

\section{Analysis of Approximation Factor of the 4-Approximation Algorithm}

\begin{definition}{\textbf{Approximation Algorithm}}
An approximation algorithm $\mathcal{X}$ is an algorithm that finds approximate answers to optimization problems. For minimization problems, given an instance $I$ for some problem $P$ such that the optimum value for this instance with respect to the problem $P$ is $OPT(I)$, the algorithm $\mathcal{X}$ produces a feasible solution with cost $\mathcal{X}(I)$ such that $\mathcal{X}(I) \le \alpha OPT(I) \ \forall \ I$ where $\alpha \ge 1$ is called the approximation factor.
\end{definition}

\begin{theorem}
Algorithm $\mathcal{A}$ from section 3.2 is a $4$-approximation algorithm for CMP
\end{theorem}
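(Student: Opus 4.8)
The plan is to verify two things: that $\bigcup_{i=1}^{k}\mathcal{H}_i$ is a feasible solution of CMP, and that its cost is at most $4\,c(\mathsf{OPT})$.

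For feasibility, I would first argue that each $\mathcal{H}_i$ is a single path from $d_i$ to $t_i$: in the union $\mathcal{P}_i\cup\mathcal{C}_i$ the depot $d_i$ has degree exactly $3$ (degree $1$ in the path $\mathcal{P}_i$, degree $2$ in the cycle $\mathcal{C}_i$), so the Phase III shortcut over one $\mathcal{P}_i$-edge and one $\mathcal{C}_i$-edge at $d_i$ turns the cycle into a $d_i$-rooted path and splices it onto $\mathcal{P}_i$, leaving $d_i$ with degree $1$; the case $\mathcal{C}_i=\phi$ is immediate. Then, since $\bigcup_i S_i = D\cup T\cup A$ and $\bigcup_i T_i = D\cup F$, every vertex of $V = D\cup T\cup F\cup A$ lies on some $\mathcal{H}_i$; a vertex of $A_i$ appears only on $\mathcal{H}_i$ (so the motion constraints are met) and a vertex of $F$ lies in exactly one tree of the spanning forest $\mathsf{SF}$, hence on exactly one $\mathcal{H}_i$. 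So each target in $Q=F\cup A$ is visited exactly once.

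For the cost bound, the Shortcut definition together with the triangle inequality gives $c(\mathcal{H}_i)\le c(\mathcal{P}_i)+c(\mathcal{C}_i)$, so it suffices to show $\sum_i c(\mathcal{P}_i)\le 2\,c(\mathsf{OPT})$ and $\sum_i c(\mathcal{C}_i)\le 2\,c(\mathsf{OPT})$. For Phase I: the doubled $MST_i$ with one $(d_i,t_i)$-copy removed is an Eulerian multigraph whose (reweighted) cost is $2\,c(MST_i)$, so the Euler path, and hence by repeated shortcutting $\mathcal{P}_i$, satisfies $c(\mathcal{P}_i)\le 2\,c(MST_i)$. To bound $\sum_i c(MST_i)$ I would fix $\mathsf{OPT}$ and restrict the path $\mathsf{OPT}_i$ of salesman $s_i$ to the vertex set $S_i=\{d_i,t_i\}\cup A_i$: because $s_i$ is the only salesman allowed to visit $A_i$, $\mathsf{OPT}_i$ passes through all of $S_i$, so shortcutting it yields a Hamiltonian path of $S_i$ from $d_i$ to $t_i$ of cost at most $c(\mathsf{OPT}_i)$; this path is a spanning tree of $S_i$, and since the reassigned weight of $(d_i,t_i)$ is $0$ the minimum spanning tree $MST_i$ is no more expensive, giving $c(MST_i)\le c(\mathsf{OPT}_i)$ and hence $\sum_i c(MST_i)\le c(\mathsf{OPT})$. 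For Phase II: doubling $T_i$ and shortcutting an Euler tour gives $c(\mathcal{C}_i)\le 2\,c(T_i)$, and $\sum_i c(T_i)=c(\mathsf{SF})=c(MST_{G'})$ because the $k-1$ deleted edges have weight $0$; restricting each $\mathsf{OPT}_i$ to $\{d_i\}\cup(F\cap\mathsf{OPT}_i)$ and shortcutting gives $k$ paths that together cover $D\cup F$ (each vertex of $F$ is visited by exactly one salesman), and adding back the $k-1$ zero-weight inter-depot edges produces a connected spanning subgraph of $G'$ of cost at most $c(\mathsf{OPT})$, so $c(MST_{G'})\le c(\mathsf{OPT})$. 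Combining the two halves yields $c\!\left(\bigcup_i\mathcal{H}_i\right)=\sum_i c(\mathcal{H}_i)\le 4\,c(\mathsf{OPT})$.

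I expect the main obstacle to be the two inequalities $\sum_i c(MST_i)\le c(\mathsf{OPT})$ and $c(MST_{G'})\le c(\mathsf{OPT})$, and in particular the bookkeeping around the reassigned zero-weight edges. One has to check that the ``restrict $\mathsf{OPT}_i$ and shortcut'' operation really produces a spanning tree (resp.\ a spanning connected subgraph of $G'$) once the zero-weight edges are reinserted, and that reweighting those edges to $0$ can only lower the relevant minimum spanning tree; and, symmetrically, that the Euler path built from the doubled $MST_i$ — which still contains the zero-weight $(d_i,t_i)$ edge — shortcuts down to a Hamiltonian path whose true cost is still at most $2\,c(MST_i)$. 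Once these reweighting arguments are pinned down, the remainder is the standard double-tree accounting together with the observation that shortcutting never increases cost.
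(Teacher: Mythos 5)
Your proposal is correct and follows essentially the same route as the paper's own proof: bound $\sum_i c(\mathcal{P}_i)\le 2\,c(\mathsf{OPT})$ by shortcutting each $\mathsf{OPT}_i$ down to a Hamiltonian path on $S_i$ (which dominates $c(MST_i)$), bound $\sum_i c(\mathcal{C}_i)\le 2\,c(\mathsf{OPT})$ by shortcutting $\mathsf{OPT}$ down to a spanning forest of $D\cup F$ compared against $\mathsf{SF}$, and combine via the Phase III shortcut and the triangle inequality to get the factor $4$. The only differences are presentational: you additionally spell out feasibility of $\bigcup_i\mathcal{H}_i$ and the bookkeeping for the reassigned zero-weight edges, which the paper treats implicitly in the algorithm description rather than in the theorem's proof.
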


\begin{proof}

Denote the optimum solution for CMP by $\mathsf{OPT}$. This consists of $k$ edge disjoint paths $\mathsf{OPT_i}$, one path for each salesman $s_i$.

First, bound the weight of the path $\mathcal{P}_i$ for each salesman $s_i$. Let the optimum Hamiltonian path over the vertex set $S_i$ starting at $d_i$ and ending at $t_i$ be $H^*_i$. $\mathsf{OPT_i}$ consists of a path starting at $d_i$ and ending at $t_i$ and consists of all vertices in $A_i$ and possibly some vertices from the set $F$. If we perform repeated shortcut operations over the vertices from the set $F$ from each path $\mathsf{OPT_i}$, then we would obtain a valid Hamiltonian path starting at $d_i$ and ending at $t_i$ over the vertex set $S_i$. Since the optimal path of this kind is $H^*_i$, we trivially have that

$$c(H^*_i) \le c(\mathsf{OPT_i})$$.

Since the path $\mathcal{P}_i$ was obtained by doubling the edges of the minimum spanning tree $MST_i$ followed by shortcutting, and because the weight of an minimum spanning tree - $c(MST_i)$ is a known lower bound for $c(H^*_i)$, we have that

\begin{equation} \label{eq:11}
c(\mathcal{P}_i) \le 2 \cdot c(MST_i) \le 2 \cdot c(H^*_i) \le 2 \cdot c(\mathsf{OPT_i})
\end{equation}
\begin{equation} \label{eq:12}
\implies \sum_{i=1}^{k} c(\mathcal{P}_i) \le 2 \cdot \sum_{i=1}^{k}c(\mathsf{OPT_i}) = 2 \cdot c(\mathsf{OPT})
\end{equation}

Now we will bound the quantity $\sum_{i=1}^{k} c(\mathcal{C}_i)$ in terms of $c(\mathsf{OPT})$. If we shortcut the vertices belonging to the set $A = \cup_{i=1}^{k} A_i$ from $\mathsf{OPT}$ we obtain a spanning forest over the set of vertices $D \cup F$ which consists of exactly $k$ trees and each tree consists of exactly one depot. Since we constructed the spanning forest $\mathsf{SF} = \cup_{i=1}^{k}T_i$ via a minimum spanning tree, we have that 

$$\sum_{i=1}^{k} c(T_i) \le c(\mathsf{OPT})$$

But we know that due to our 2-approx algorithm for construction of a least weight Hamiltonian cycle over the set of vertices defined by each tree $T_i$, we have $c(\mathcal{C}_i) \le 2 \cdot c(T_i) \ \forall \ i$. This implies the following

\begin{equation} \label{eq:13}
\sum_{i=1}^{k}c(\mathcal{C}_i) \le 2\cdot \sum_{i=1}^{k} c(T_i) \le 2 \cdot c(\mathsf{OPT})
\end{equation}

Each path $\mathcal{H}_i$ is obtained by combining $\mathcal{C}_i$ and $\mathcal{P}_i$ via one shortcut operation over the depot $d_i$. By the triangle inequality and equations \ref{eq:12} and \ref{eq:13}, we get

\begin{equation}
\sum_{i=1}^{k} c(\mathcal{H}_i) \le \sum_{i=1}^{k} c(\mathcal{P}_i) + c(\mathcal{C}_i) \le 2 \cdot c(\mathsf{OPT}) + 2 \cdot c(\mathsf{OPT}) = 4 \cdot c(\mathsf{OPT})
\end{equation}

Thus our algorithm $\mathcal{A}$ gives us an approximation factor of 4.

\end{proof}

\subsection{Running Time Analysis}

Our algorithm employs three main routines - computation of minimum spanning trees, computation of Euler tours and the shortcut operations.

\begin{itemize}
\item Construction of a minimum spanning tree on a graph with $n$ nodes and $m$ edges can be done in $\mathcal{O}(m\pmb\upalpha(m,n))$ time where $\pmb\upalpha$ is the functional inverse of the rapidly growing Ackermann function \cite{FastMST}

\item Construction of an Euler Tour on a graph with $n$ nodes and $m$ edges can be done in $\mathcal{O}(n+m)$ time \cite{CentralizedEuler}

\item Shortcutting the repeated vertices of a path of length $n$ can be done trivially in $\mathcal{O}(n)$ time with the aid of doubly linked lists and keeping track of visited vertices in the path.

\end{itemize}
\
\newline
Using this we can analyze the run time of $\mathcal{A}$ as follows

\begin{enumerate}
\item \textit{Running Time in Phase I}: For each subgraph of $G$ defined by the vertex set $S_i$, the construction of a minimum spanning tree takes $\mathcal{O}\left(|A_i|^2\pmb\upalpha(|A_i|^2,|A_i|)\right)$ time. Doubling the edges and computing an Euler path over this multigraph will take $O(|A_i|)$ time. Since the Euler path will also be of length $\mathcal{O}(|A_i|)$, the shortcutting step will also take $\mathcal{O}(|A_i|)$ time. Thus phase I is dominated by the computation of a minimum spanning tree. Summing up over all sets $S_i$, phase I takes $\mathcal{O}(\sum_{i=1}^{k}|A_i|^2\pmb\upalpha(|A_i|^2, |A_i|))$ time.

\item \textit{Running Time in Phase II}: Re-assigning weights to edges connecting the depots will take $\mathcal{O}(k^2)$ time. We construct the MST just once over the set of vertices $D\cup F$. This will take $\mathcal{O}((|F|+k)^2\pmb\upalpha((|F|+k)^2), (|F|+k))$ time and this will dominate the running time in phase II similar to how it did in phase I since all other steps are done in linear time in the number of nodes.

\item \textit{Running Time in Phase III}: Since phase III involves performing a simple shortcut operation for each salesman $i$ which takes $\mathcal{O}(|\mathcal{C}_i| + |\mathcal{P}_i|)$ time per salesman, summing over all salesman $s_i$, we get a running time of $\mathcal{O}(|V|)$

\item \textit{Total Running Time}: Adding up the running times for each phase, we get a total running time of $$\mathcal{O}\left(\sum_{i=1}^{k} |A_i|^2\pmb\upalpha\left(|A_i|^2, |A_i|\right) + (|F|+k)^2\pmb\upalpha\left((|F|+k)^2, |F|+k\right) + |V|\right)$$

\item \textit{Worst Case Running Time}: In the worst case, the running time will be dominated by the running time for the cost of construction of minimum spanning tree.  If $n = |V|$ our running time can also be neatly written as $\mathcal{O}(n^2\pmb\upalpha(n^2, n))$
\end{enumerate}

\section{Pseudocode for Centralized 4-Approximation Algorithm for CMP}

Given below is the pseudocode for the centralized 4-approximation algorithm of CMP. It makes use of four other routines - MinSpanningTree(), EulerPath(), EulerTour() and Shortcut(), which return the minimum spanning tree of a given graph \cite{FastMST}, the Euler Tour of a given graph \cite{CentralizedEuler}, the Euler path in a given graph between two specified endpoints \cite{CentralizedEuler} and the repeated shortcut operation as explained in the algorithm above respectively.

\begin{algorithm}[H]
\caption{Pseudocode for $\mathcal{A}$}
\begin{algorithmic}[1]
\Procedure{CMP}{$D$, $T$, $A$, $F$, $E$, $k$}
\For{i = $1$ to $k$}
	\State $w_{d_i, t_i} = 0$
	\State $S_i = \{d_i, t_i\} \cup A_i$
	\State $MST_i =$ MinSpanningTree$(S_i, E)$
	\State $MST_i = MST_i + MST_i$
	\Comment{Double the edges}
	
	\State $MST_i = MST_i / (d_i, t_i)$
	\Comment{Delete one of the edges connecting $d_i$ to $t_i$}	
	
	\State $E_i =$ EulerPath$(MST_i, d_i, t_i)$
	\State $\mathcal{P}_i = $ ShortCut$(E_i)$
\EndFor
\newline
\For{i = $1$ to $k$}
	\For{j = i+1 to $k$}
		\State $w_{ij} = 0$
		\Comment{Re-assign weights of edges connecting depots}
	\EndFor
\EndFor
\newline
\State $SF$ = MinSpanningTree$(D\cup F, E)$
\State $SF = SF / \{(i,j) \ | \ w_{ij} = 0\}$
\Comment{Delete the 0 weight edges}
\newline
\For{i = 1 to $k$}
	\Comment{Construct a Hamiltonian cycle over each tree of $SF$}
	\State $\mathcal{C}_i =$ Shortcut$($EulerTour$(T'_i + T'_i)$ $)$
\EndFor
\newline
\For{i = 1 to $k$}
	\Comment{Combine $\mathcal{P}_i$ and $\mathcal{C}_i$ for each salesman}
	\State $\mathcal{H}_i =$ Shortcut$(\mathcal{P}_i + \mathcal{C}_i)$
\EndFor

\State \Return{$\mathcal{H} = \cup_{i=1}^{k} \mathcal{H}_i$}

\EndProcedure
\end{algorithmic}
\end{algorithm}

Lines 2 to 9 correspond to phase I of the algorithm where in we construct a minimum spanning tree over each set of vertices $S_i$, double its edges, delete one copy of the edge connecting $d_i$ to $t_i$ and then construct a Euler Tour followed by a shortcutting step to give us a Hamiltonian Path starting at $d_i$ and ending at $t_i$ for each vertex set $s_i$.

Lines 11 to 20 correspond to phase II of the algorithm where we first re-assign the weights of the edges connecting all depots to 0, construct a minimum spanning tree and subsequently a spanning forest by deleting all 0 weight edges from this forest such that each tree of the forest contains exactly one depot. Then we proceed to construct a Hamiltonian cycle over the graph for each vertex set defined by a tree of the spanning forest.

Lines 21 to 23 correspond to phase III of the algorithm. Each path $\mathcal{H}_i$ is obtained by combining the corresponding path $\mathcal{P}_i$ and cycle $\mathcal{C}_i$ via a single shortcut operation over the depot $d_i$. Since the depot $d_i$ is the only vertex that is repeated in the path $\mathcal{P}_i + \mathcal{C}_i$, we can simply write this single shortcut operation as Shortcut$(\mathcal{P}_i +\mathcal{C}_i)$

In Line 24, we collect all the paths $\mathcal{H}_i$ for each salesman $s_i$ and this collection of routes $\mathcal{H}$ is the output of our algorithm.

\section{Pseudocode for Distributed Shortcutting and Complexity Analysis}

Algorithm 2 gives the fully distributed pseudocode for shortcutting for a path $\mathcal{P} = (v_1, v_2), (v_2, v_3), \dots, (v_{r-1}, v_r)$. This code is executed for each process $p_i$ in the network. There also exist two distinguished processes (one in case we have a cycle instead of the path), the end points of the path.

In line 1, variables $cnt$, $vis$, $adj_\mathcal{P}$ and $adj_g$ correspond to the variables as described in section 4.1. An additional boolean variable $start$ is used to indicate whether a node has woken up or not.

In lines 3-9, when a processor wakes up spontaneously, it checks if it is the terminal node of the path and if yes, sets its variable $vis$ to $true$. If the processor corresponds to node $v_1$ then it makes a call to the moveForward() sub-routine.

Lines 30 to 34 correspond to the moveForward() sub-routine which corresponds to identifying the next node that must be visited in the traversal of the path and passing a message to it to indicate that the node adjacent along this edge must continue the algorithm.

In Lines 10 to 18, when a processor receives a node, if it is the last node in the path, it terminates. Else it identifies if it has been visited before in the traversal of the path or not as per value of the variable $vis$. If this is the first time a message has arrived at said processor, then it sets variable $vis$ to true and calls the moveForward() sub-routine else it performs a shortcut along itself.

The shortcut operation is performed with the aid of the shortcut sub-routine in lines 35 to 40 which identifies which two edges adjacent node must be shortcutted. The message $\langle R, x, y, id \rangle$ instructs the node receiving this message to replace the edge going towards $x$ (or incoming from $x$) with a an edge going towards $y$.

It is not hard to see that the algorithm sends at most two messages over each edge of the path $\mathcal{P}$ (at most once per moveForward() and at most once for a shortcut() operation()) and hence it runs in $\mathcal{O}(|\mathcal{P}|)$ time. This also gives a bound of $\mathcal{O}(|\mathcal{P}|)$ on the message complexity of the algorithm.

\begin{algorithm}[H]
\caption{Distributed Shortcutting, code for processor $p_i$}
\begin{algorithmic}[1]
\State $cnt = 0,\ vis = false,\ adj_\mathcal{P},\ adj_g,\ start = false$
\State upon receiving no message:
\Indent
	\If{$start = false$}
		\State $start = true	$
		\If{$p_i = v_r$}
			$vis = true$
		\EndIf
		\If{$p_i = v_1$}
			moveForward()
		\EndIf
	\EndIf
\EndIndent
\newline

\State upon receiving $\left\langle M \right\rangle$ from $p_j$:
\Indent
	\State $cnt = cnt + 1$
	\If{$cnt = |adj_\mathcal{P}|$ }
		\textbf{terminate}
	\EndIf	
	
	\If{$vis = false$}
		\State moveForward()
	\Else
		\State shortcut($p_j$)
	\EndIf
\EndIndent
\newline

\State upon receiving  $\left\langle R, x, y, id \right\rangle$ from $p_j$
\Indent 
	\State $b$ = label of $x$ in $adj_\mathcal{P}$
	\State $adj_\mathcal{P}(b) = y$
	\If{$id = 1$}
		\State $cnt = cnt + 1$
		\If{$vis = false$}
			\State moveForward()		
		\Else
			\State shortcut()
		\EndIf
	\EndIf
\EndIndent
\newline

\Procedure{moveForward}{}()
	\State $vis = true$
	\State $cnt = cnt + 1$
	\State send $\left\langle M \right\rangle$ across $adj_\mathcal{P}(cnt)$
\EndProcedure
\newline

\Procedure{shortcut}{$p_j$}
	\State $p_k = adj_\mathcal{P}(cnt+1)$
	\State send $\left\langle R, p_i, p_k, 0 \right\rangle$ to $p_j$
	\State send $\left\langle R, p_i, p_j, 1 \right\rangle$ to $p_k$
	\State $cnt = cnt + 1$
\EndProcedure
\end{algorithmic}
\end{algorithm}

\end{appendices}

\end{document}